\documentclass[conference]{IEEEtran}
\IEEEoverridecommandlockouts
\usepackage{cite}
\usepackage{amsmath,amssymb,amsfonts}
\usepackage{algorithmic}
\usepackage{algorithm}
\usepackage{graphicx}
\usepackage{textcomp}
\usepackage{amsthm}
\usepackage{xcolor}
\usepackage{comment}
\def\BibTeX{{\rm B\kern-.05em{\sc i\kern-.025em b}\kern-.08em
    T\kern-.1667em\lower.7ex\hbox{E}\kern-.125emX}}


\def\logisticweight{W}

\def\noise{Z}

\def\LLR{\text{LLR}}
\def\reliability{\Gamma}

\def\partitionnumber{\rho}
\def\generatormatrix{G}
\def\codebook{\mathcal{C}}
\def\softoutput{S}
\def\listset{\mathcal{L}}
\def\outcome{O}

\newtheorem{thm}{Theorem}

\begin{document}
\title{Leveraging Code Structure to Improve Soft Output for GRAND, GCD, OSD, and SCL\\
{}
\thanks{This work was supported by the Defense Advanced Research Projects Agency (DARPA) under Grant HR00112120008}
}

\author{\IEEEauthorblockN{Jiewei Feng}
\IEEEauthorblockA{\textit{Dept. of Elec. \& Comput. Engineering} \\
\textit{Northeastern University}\\
Boston, USA \\
feng.ji@northeastern.edu}
\and
\IEEEauthorblockN{Ken R. Duffy}
\IEEEauthorblockA{\textit{Dept. of ECE \& Dept. of Mathematics} \\
\textit{Northeastern University}\\
Boston, USA \\
k.duffy@northeastern.edu}
\and
\IEEEauthorblockN{Muriel M{\'e}dard}
\IEEEauthorblockA{\textit{Research Laboratory for Electronics} \\
\textit{Massachusetts Institute of Technology}\\
Cambridge, USA \\
medard@mit.edu}
}

\maketitle

\begin{abstract}
In addition to a proposed codeword, error correction decoders that provide blockwise soft output (SO) return an estimate of the likelihood that the decoding is correct. Following Forney, such estimates are traditionally only possible for list decoders where the soft output is the likelihood that a decoding is correct given it is assumed to be in the list. Recently, it has been established that Guessing Random Additive Noise Decoding (GRAND), Guessing Codeword Decoding (GCD), Ordered Statistics Decoding (OSD), and Successive Cancellation List (SCL) decoding can provide more accurate soft output, even without list decoding. Central to the improvement is a per-decoding estimate of the likelihood that a decoding has not been found that can be readily calculated during the decoding process. Here we explore how linear codebook constraints can be employed to further enhance the precision of such SO. We evaluate performance by adapting a forecasting statistic called the Brier Score. Results indicate that the SO generated by the approach is essentially as accurate as the maximum a posteriori estimate.
\end{abstract}

\begin{IEEEkeywords}
Soft Input, Soft Output, GRAND, GCD, OSD, SCL, Polar coding
\end{IEEEkeywords}

\section{Introduction}
For any soft input (SI) error correction decoder, it would be desirable if, in addition to a codeword, it returned blockwise soft output (SO) in the form of an accurate a posteriori likelihood that the proposed decoding is correct. Blockwise SO could be leveraged to decide whether retransmission is necessary, or to calculate bitwise SO and so enable iterative decoding of long, powerful codes. The first blockwise SO formula was established by Forney \cite{Forney68}. It necessitates a list decoder as the SO is approximated by the conditional probability that each codeword in a list is correct given the transmitted codeword is assumed to be in the list. The same approximation underlies the celebrated Chase-Pyndiah iterative SISO decoder of turbo product codes \cite{pyndiah_1998} and subsequent alternatives \cite{bioglio2019construction,condo2020practical,cocskun2024precoded,condo2022_iterative,galligan2023block}.


Recently, it was established that SI Guessing Random Additive Noise Decoding (GRAND) algorithms \cite{solomon20, duffy2022_ordered, an2023soft, Duffy23ORBGRANDAI, Riaz24} can produce accurate blockwise SO, even for a single decoding \cite{galligan2023upgrade,yuan2025SOGRAND}. The key advance is the derivation of an estimate of the likelihood that the correct decoding has not been yet found that can be readily evaluated during decoding, which removes Forney's conditioning. That feature enables SO without list decoding while also improving SO for list decoding. Adapting the derivation of that estimate for GRAND, it has since been shown that accurate blockwise SO can also be extracted from Guessing Codeword Decoding (GCD) \cite{ma2024guessing,zheng2024universal} or any version of Ordered Statistics Decoding (OSD) \cite{OSD95}, as well as Successive Cancellation List decoding \cite{tal2015list}, resulting in SO-GRAND, SO-GCD, and SO-SCL, respectively, \cite{yuan2025SOGRAND,duffy2025SOGCD,yuan2025SOSCL}. Use of them as SISO decoders has shown, for example, that turbo product codes can outperform LDPC codes in standards while requiring fewer iterations to decode.

While a general formulation for the SO is provided in \cite{galligan2023upgrade,yuan2025SOGRAND}, the approximation used in \cite{yuan2025SOGRAND,duffy2025SOGCD,yuan2025SOSCL} is based on the assumption that the code is constructed uniformly at random. Extending considerations, here we theoretically establish that it is possible to leverage linear codebook constraints to further enhance blockwise SO accuracy. Blockwise SO can be thought of as a forecast in the form of a predicted probability that a decoding is correct. To quantitatively assess the quality of SO, we use a variant of a metric from the forecasting literature called the Brier Score (BS) that enables empirical evaluation of the quality of probabilistic predictions from data \cite{DeGroot83,gneiting2007strictly,wilks2011statistical}. Unlike many BS applications where forecasts are often highly uncertain, in practical decoding applications error rates are typically small. Thus, we report a modified BS score that is normalized against a naive SO estimate that always predicts the decoding is correct, for which we show the BS corresponds to the decoder's Block Error Rate (BLER).

The remainder of this paper is organized as follows. Section \ref{sec_preliminary} introduces the basic setup of the decoding scheme including SO formulas. Section \ref{section_even_code} establishes the modification to SO that leverages binary linear codebook structure. Section \ref{sec:theory} provides theoretical analysis for the performance gain for ORBGRAND. Section \ref{section_bs} introduces BS, its modification for SO decoders and reports empirical results. Section \ref{sec_conclusion} provides concluding remarks.

\section{Soft output formulae}\label{sec_preliminary}

Consider a binary, non-necessarily linear, code, $\codebook\subset\{0,1\}^n$, consisting of $2^k$ binary strings of length $n$. Codewords, $X^n$, are assumed to be selected uniformly at random from the codebook. A modulated codeword $X^n$ is sent through a noisy channel and the detector receives the signal $R^n=(R_1,...,R_n)$. Given knowledge of the channel and assuming bits are impacted independently, the conditional pdf of $R$ given $X$, $f_{R|X}$, can be calculated. The log-likelihood ratio (LLR) of a received signal $r$ is then defined to be
\begin{align}
\LLR(r)=\log\frac{f_{R|X}(r|1)}{f_{R|X}(r|0)}. \label{LLR_formula}
\end{align} 
From eq. \eqref{LLR_formula}, we set $\Gamma(R_i)=|\LLR(R_i)|$ to be the reliability of $R_i$. Defining $Y_i=(\text{sign}(\LLR (R_i))+1)/2$, we use $Y^n=(Y^n_1,Y^n_2,...,Y^n_n)$ to denote hard demodulated bits extracted from $R^n$. With $\oplus$ representing addition in $\mathbb{F}_2$, defining $\noise^n=Y^n\oplus X^n$, $\noise^n$ represents the binary difference between the transmitted codeword and the hard decision bits. 

The likelihood that a hard decision bit $Y_i$ is in error can be identified as $B_i={e^{-\reliability(R_i)}}/(1+e^{-\reliability(R_i)})$, from which we can evaluate the posterior likelihood of a binary noise effect sequence, $z^n$:
\begin{align}
&p_{Z^n|R^n}(z^{n}|r^n) 
=\prod_{i=1}^n{(1-B_i)}\prod_{i:z_i=1}{\frac{B_i}{1-B_i}} \nonumber \\ 
&\propto \prod_{i:z_i=1}{\frac{B_i}{1-B_i}}
=\exp\left({-\sum_{i=1}^n\reliability(r_i)z_i}\right).  \label{noiseprob}
\end{align}

Given $R^n$, decoders provide an estimate, $x^{n,*}\in\codebook$, of the transmitted $x^n$. 
A blockwise SO-decoder also provides an estimate of the probability that $x^{n,*}$ is the correct decoding, $\softoutput\in[0,1]$. The optimal $S$ is the maximum a posteriori likelihood that a given codeword $x^{n,*}\in\codebook$ is the transmitted one, which is given by
\begin{align}
    p_{X^n|R^n}(x^{n,*}|r^n)=\frac{f_{R^n|X^n}(r^n|x^{n,*})}{\sum_{x^n\in\codebook}f_{R^n|X^n}(r^n|x^n)}\label{eq_trueSO}
\end{align}
where $f_{R^n|X^n}$ is the conditional pdf of the received signal $R^n$ conditioned on the transmitted codeword $X^n\in\codebook$. As the denominator has $2^k$ terms, however, it is not computationally viable to calculate for values of $k$ used in practice. When a decoder provides a list of $L>1$ codewords $\listset\subset\codebook$, Forney \cite{Forney68} introduced the following approximation 
\begin{align}
        p_{X^n|R^n}(x^{n,*}|r^n)\approx\frac{f_{R^n|X^n}(r^n|x^{n,*})}{\sum_{x^n\in\listset} f_{R^n|X^n}(r^n|x^{n})}\label{eq_ForneySO}
\end{align}
where the term $\sum_{x^n\in\codebook\setminus\listset}f_{R^n|X^n}(r^n|x^{n})$ is assumed to be zero, resulting in the approximation being the conditional probability that $x^{n,*}$ is the correct decoding given the correct decoding is the list. 

GRAND algorithms \cite{duffy19GRAND} operate by generating putative noise effects, inverting their effect from demodulated signal, and querying if what remains is in the codebook. The first instance where this is correct is a maximum likelihood decoding if queries are from most likely to least likely. By analysing that process, a better estimate of eq. \eqref{eq_trueSO} has been developed \cite{galligan2023upgrade,yuan2025SOGRAND} that naturally includes an approximation to the omitted term in eq. \eqref{eq_ForneySO}. Crucially, the derivation of the SO formula that results does not require that noise effects are queried in any order, but instead applies to any query order. The principle behind that development has been adapted to SCL \cite{yuan2025SOSCL} and GCD \cite{duffy2025SOGCD}, which can be used with any Ordered Statistics Decoding (OSD) variant, but is, perhaps, most readily understood in the original context of its derivation.

GRAND generates putative noise effects $z^{n,1},z^{n,2},...\in\{0,1\}^n$, inverts them from the demodulated sequence, and outputs a decoded codeword whenever $\hat{x}^{n,i}=z^{n,i}\oplus y^n\in\codebook$. Suppose $L$ codewords are identified at query numbers $q_1,q_2,...,q_L$ and the decoder stops querying further noise effects at the $q_L$-th query. Defining
$\phi(j)=p_{Z^n|R^n}(z^{n,j}|r^n)$, in the work \cite{galligan2023upgrade,yuan2025SOGRAND}, eq. \eqref{eq_trueSO} is approximated by $p_{X^n|R^n}(\hat{x}^{n,q_i}|r^n)$
\begin{align}
     \approx&{\phi(q_i)}\bigg/\left[{\sum\limits_{j=1}^L \phi(q_j)+\left(1-\sum\limits_{j=1}^{q_L} \phi(j)\right)\left(\frac{2^k-1}{2^n-1}\right)}\right]. \label{eq_SOGRAND}
\end{align}
The second term in the denominator approximates the omitted term in (\ref{eq_ForneySO}), essentially by assuming that the unvisited codewords in $\codebook\setminus\listset$ are uniformly distributed among the noise effects that have not yet been queried. Variants of that approximation have been used in both \cite{yuan2025SOSCL,duffy2025SOGCD}.

\section{Leveraging Binary Linear Codebook Structure for SO}\label{section_even_code}
Prior work has considered the use of binary linear codebook structure to inform guesswork orders for GRAND \cite{Rowshan22, Rowshan23, rowshan2023segmented} and for GCD \cite{griffin2024using}. Here we consider its use to inform revisions to eq. \eqref{eq_SOGRAND}. Essentially, knowledge of binary linear code constraints impacts the evaluation of $\phi(j)=p_{Z^n|R^n}(z^{n,j}|r^n)$ by adding conditioning and changing the proportion of remaining noise-effects that could result in a codeword.

The principle is most readily introduced with an example. For any binary sequences $z^n$, we define 
$\Phi(z^n) = \bigoplus_{i=1}^n z_i$ to be its parity. A code is called {\it even} if all codewords have even parity, i.e. $\Phi(X^n)=0$ for all $X^n\in\codebook$. Many well-known binary linear codes, including extended BCH (eBCH) codes, Polar codes \cite{Arikan09}, polarization-adjusted convolutional codes, as well as Koopman’s CRC codes \cite{koopman2004cyclic} with certain dimensions, are even codes. For any $x^n$ chosen from an even code
\begin{align*}
    \Phi(Y_i)  &= \Phi(x_i\oplus Z_i) 
    = \Phi(x_i) \oplus\, \Phi(Z_i)
    = \Phi(Z_i),
\end{align*}
as $\Phi(x_i)=0$, and thus the parity of the demodulated bits at the receiver, $Y^n$, must be the same as the parity of the noise-effect $Z^n$ that has impacted the codeword $x^n$. 

As a result, for an even code, only query noise effects that have the same parity as $Y^n$ can result in the identification of a codeword. As a complexity reduction technique for GRAND, this results in the number of queries required to identify a codework being reduced by a factor of up to $1/2$ without any sacrifice in precision \cite{Rowshan22, Rowshan23, rowshan2023segmented}. Further use of linear codebook structure can reduce query numbers further \cite{Rowshan22, Rowshan23, rowshan2023segmented}. 

Whether one uses the even code property to inform query order or not, or whichever of the SO algorithms is used, it can be exploited to enhance SO. Note that for an even code one has that $p_{Z^n|R^n}(z^n|r^n)$
\begin{align*}
=
    \begin{cases}
    0 & \text{ if } \Phi(z^n)\neq  \Phi(y^n) \\
    \displaystyle 
    \frac{p_{Z^n|R^n}(z^n|r^n)}{P(\Phi(Z^n) = \Phi(y^n)|R^n=r^n)}
      & \text{ if } \Phi(z^n)= \Phi(y^n).
    \end{cases}
\end{align*}
That is, knowing that the code is even, noise effects that cannot lead to a codeword have no probability, while those that can lead to a codeword are inflated by the likelihood that the parity of the noise effect is consistent with the demodulated sequence. The latter can be calculated explicitly via Lemma 1. of \cite{Gallager62}, which is well known from its use in the decoding of Low Density Parity Check Codes:
\begin{align*}
    &P(\Phi(Z^n) = \Phi(y^n)|R^n=r^n) \\
    &= 
    \begin{cases} 
      \left(1+\prod_{i=1}^n(1-2B_i)\right) & \text{if } \Phi(y^n) = 0 \\
      \left(1-\prod_{i=1}^n(1-2B_i)\right) & \text{if } \Phi(y^n) = 1.
   \end{cases}
\end{align*}
Using this evaluation in eq. \eqref{eq_SOGRAND}, only one further correction is required. For a uniform at random codebook, eq. \eqref{eq_SOGRAND} assumes that the $2^k-1$ non-transmitted codewords are uniformly distributed in the $2^n-1$ possible noise effect sequences. For an even code, the latter is changed to $2^{n-1}-1$ as half of all strings have the wrong parity. This results in the following theorem for GRAND.

\begin{thm}\label{thm_even_so} Assuming a GRAND decoder is used to decode an even code. Define $I_{i}=1$ if the $i$-th queried noise effect has the same parity as $y^n$ and $I_{i}=0$ if otherwise. Since the code is even, assuming that the unvisited codewords are uniformly distributed within the noise effects that have not been generated and have the same parity as $y^n$, then an approximation of (\ref{eq_trueSO}) can be given by $p_{X^n|R^n}(\hat{x}^{n,q_i}|r^n)$
\begin{align}
        \approx\frac{\phi(q_i)}{\sum\limits_{j=1}^L \phi(q_j)+\left(\psi-\sum\limits_{j=1}^{q_L}\phi(j)I_{j}\right)\displaystyle\frac{2^k-1}{2^{n-1}-1}}\label{eq_SO_even}
\end{align}
where $\phi(j)=p_{Z^n|R^n}(z^{n,j}|r^n)$ and 
\begin{align*}
2\psi=\begin{cases} 
      \left(1+\prod_{i=1}^n(1-2B_i)\right) & \text{if } \sum_{i=1}^{n}y^n_i \equiv 0\mod 2  \\
      \left(1-\prod_{i=1}^n(1-2B_i)\right) & \text{if } \sum_{i=1}^{n}y^n_i \equiv 1\mod 2 
   \end{cases} 
\end{align*}
\end{thm}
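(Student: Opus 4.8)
The plan is to obtain \eqref{eq_SO_even} by substituting the even-code-aware quantities into the already-established general approximation \eqref{eq_SOGRAND}, rather than re-deriving from scratch. Recall that \eqref{eq_SOGRAND} follows from the exact posterior \eqref{eq_trueSO} by writing $f_{R^n|X^n}(r^n|x^n)\propto p_{Z^n|R^n}(y^n\oplus x^n|r^n)$, splitting the codebook into the $L$ identified codewords and the unvisited ones, and approximating the unvisited mass by the residual un-queried probability $1-\sum_{j=1}^{q_L}\phi(j)$ times the fraction $\tfrac{2^k-1}{2^n-1}$ of noise effects that are codewords. The first step is therefore to replace the \emph{unconstrained} noise likelihood by the \emph{code-aware} one. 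Using the conditioning identity stated above together with Gallager's lemma, the code-aware likelihood equals $\phi(j)/\psi$ for any noise effect of parity $\Phi(y^n)$ and $0$ otherwise, where $\psi=P(\Phi(Z^n)=\Phi(y^n)\mid R^n=r^n)$ is exactly the quantity in the statement. Because $\Phi(y^n\oplus x^n)=\Phi(y^n)$ for every even codeword, all $L$ identified codewords carry correct parity, so their code-aware likelihoods are $\phi(q_j)/\psi$.

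Next I would adjust the unvisited-mass estimate for the parity restriction in the two places it enters. Since only the $2^{n-1}$ correct-parity noise effects can be codewords, the uniform-spreading fraction becomes $\tfrac{2^k-1}{2^{n-1}-1}$, the $-1$ in both counts excluding the transmitted codeword and its (correct-parity) noise effect exactly as in \eqref{eq_SOGRAND}. For the residual mass, the total code-aware probability is again $1$, while the code-aware mass already spent on queried noise effects is $\tfrac{1}{\psi}\sum_{j=1}^{q_L}\phi(j)I_j$, the indicator $I_j$ discarding wrong-parity queries that carry zero code-aware likelihood. Substituting $\phi(q_i)/\psi$, $\phi(q_j)/\psi$, the residual $1-\tfrac{1}{\psi}\sum_{j=1}^{q_L}\phi(j)I_j=\tfrac{1}{\psi}\big(\psi-\sum_{j=1}^{q_L}\phi(j)I_j\big)$, and the new fraction into \eqref{eq_SOGRAND}, a common factor of $1/\psi$ appears in the numerator and in every denominator term and cancels, leaving precisely \eqref{eq_SO_even}.

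The step I expect to be the main obstacle is bookkeeping the normalizer $\psi$ consistently: it must be introduced uniformly as $1/\psi$ across all code-aware likelihoods so that it cancels in the ratio, yet it must survive as the correct-parity mass budget inside the residual term $\psi-\sum_{j=1}^{q_L}\phi(j)I_j$. The cleanest way to avoid error is to confirm two accounting facts before simplifying --- that every identified codeword has parity $\Phi(y^n)$ (so $I=1$ on all found queries) and that $\sum_{j=1}^{q_L}\phi(j)I_j/\psi$ is the correct-parity queried mass under the code-aware law --- and only then clear the common $1/\psi$. The remaining ingredients, namely the proportionality $f_{R^n|X^n}\propto p_{Z^n|R^n}$ and the uniform-spreading assumption behind the fraction, are inherited unchanged from the derivation of \eqref{eq_SOGRAND}, so no new approximation beyond the parity restriction needs to be justified.
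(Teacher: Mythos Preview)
Your proposal is correct and follows essentially the same route as the paper: the paper's argument (spelled out in the text preceding the theorem and in its commented-out proof) also obtains \eqref{eq_SO_even} by taking \eqref{eq_SOGRAND} and making exactly the two substitutions you describe---replacing each $\phi(j)$ by the parity-conditioned likelihood $\phi(j)I_j/\psi$ via Gallager's lemma, and replacing $2^n-1$ by $2^{n-1}-1$ in the uniform-spreading fraction---then cancelling the common $1/\psi$. Your emphasis on checking that all identified codewords automatically carry the correct parity before clearing the normalizer is a useful sanity step but not an additional idea beyond the paper's own derivation.
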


The same idea can be applied to SO-GCD and SO-SCL, resulting in the following.
\begin{thm}\label{thm_even_so_gcd}
Let $\psi$ be defined as in Theorem \ref{thm_even_so}. Assume the first $k$ bits in $y^n,r^n,x^n,z^n$ are related to the $k$ information bits. In the GCD context, let $z^{n,j}$ record the $j$-th noise effect within the decoded list, and $z^{k,j}$ record the $k$ information bits of the $j$-th noise effect in the decoded list. Define $2\psi'(z^{k,j})=$
\begin{align*}
    \begin{cases} 
      \left(1+\prod\limits_{i=k+1}^n(1-2B_i)\right) & \text{if } \Phi(y^n)- \Phi(z^{k,j}) = 0 \\
      \left(1-\prod\limits_{i=k+1}^n(1-2B_i)\right) & \text{if } \Phi(y^n)- \Phi(z^{k,j}) = 1. 
   \end{cases} 
\end{align*}
When decoding even code with GCD, equation (6) in \cite{duffy2025SOGCD} can be modified to $p_{X^n|R^n}(z^{n,i}\oplus y^n|r^n)$
    \begin{align*}
        \approx\frac{\phi(i)}{\sum\limits_{j=1}^L \phi(j)+\left(\psi-\sum\limits_{j=1}^{L}p_{Z^n|R^n}(z^{k,j}|r^{k,j})\psi'(z^{k,j})\right)\displaystyle\frac{2^k-1}{2^{n-1}-1}}
\end{align*}
\end{thm}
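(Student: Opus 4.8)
The plan is to mirror the derivation of Theorem \ref{thm_even_so} for GRAND, but to carry it out in the GCD bookkeeping of \cite{duffy2025SOGCD}, where guessing is performed over the $k$ information positions rather than over all $n$ coordinates. Throughout I would work in the conditional probability space obtained by restricting to noise effects of valid parity; as in the GRAND case this introduces a common factor $\psi^{-1}=P(\Phi(Z^n)=\Phi(y^n)|R^n=r^n)^{-1}$ into every term, which ultimately cancels in the ratio and so need not be tracked explicitly. Here I write $p_{Z^k|R^k}(z^{k,j}|r^k)$ for the information-bit marginal, which is the quantity denoted $p_{Z^n|R^n}(z^{k,j}|r^{k,j})$ in the statement.

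First I would recall the structural form of eq. (6) in \cite{duffy2025SOGCD}: the approximation to eq. \eqref{eq_trueSO} is the found-codeword likelihood $\phi(i)$ divided by the sum of the found-codeword masses $\sum_{j=1}^L\phi(j)$ plus an estimate of the residual codeword mass, the latter being (unvisited noise-effect mass) $\times$ (fraction of unvisited noise effects that are codewords). For an even code exactly two modifications are needed, precisely as in the comment sketch for Theorem \ref{thm_even_so}: (i) codewords sit only among the $2^{n-1}$ valid-parity strings, so the fraction $\tfrac{2^k-1}{2^n-1}$ is replaced by $\tfrac{2^k-1}{2^{n-1}-1}$; and (ii) the total available noise-effect mass is no longer $1$ but the valid-parity mass $\psi$, evaluated by Lemma 1 of \cite{Gallager62}.

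The substantive step is to recompute the \emph{covered} valid-parity mass in the GCD setting. In GCD a query fixes an information-pattern $z^{k,j}$, encodes it, and thereby produces exactly one codeword; but for the purpose of estimating how much valid-parity mass has been explored I would account for \emph{all} noise effects that share the information bits $z^{k,j}$ and complete to valid parity. Factorizing the posterior over the independent information and parity blocks gives
\begin{align*}
&\sum_{z^{n-k}:\,\Phi(z^{n-k})=b_j} p_{Z^n|R^n}(z^{k,j},z^{n-k}|r^n) \\
&\qquad= p_{Z^k|R^k}(z^{k,j}|r^k)\,\psi'(z^{k,j}),
\end{align*}
with $b_j=\Phi(y^n)\oplus\Phi(z^{k,j})$, where the inner sum is evaluated by applying Lemma 1 of \cite{Gallager62} to the $n-k$ parity coordinates with target parity $b_j$; this is exactly the definition of $\psi'(z^{k,j})$, and it plays the role that $\phi(j)I_j$ played in Theorem \ref{thm_even_so}. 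Summing over the $L$ queries gives the covered mass $\sum_{j=1}^L p_{Z^k|R^k}(z^{k,j}|r^k)\,\psi'(z^{k,j})$, so the unvisited valid-parity mass is $\psi$ minus this quantity. A consistency check I would include is that letting $z^{k,j}$ range over all of $\fbinary^k$ returns $\sum_{z^k} p_{Z^k|R^k}(z^k|r^k)\psi'(z^k)=P(\Phi(Z^n)=\Phi(y^n)|r^n)=\psi$, confirming the normalization.

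Assembling the three ingredients and clearing the common $\psi^{-1}$ from numerator and denominator yields the stated expression. The main obstacle I anticipate is justifying the accounting in steps (i)--(iii): one must argue that pricing each GCD query by the \emph{aggregate} valid-parity mass of its information coset $p_{Z^k|R^k}(z^{k,j}|r^k)\psi'(z^{k,j})$, rather than by the single full noise effect $\phi(j)$ actually realized, is the correct analogue of the GRAND term $\phi(j)I_j$. This hinges on the independence of the information and parity noise blocks and on the modeling assumption, inherited from \cite{duffy2025SOGCD}, that the unvisited codewords are uniformly spread over the as-yet-unexplored valid-parity noise effects; verifying that these assumptions make the two bookkeeping schemes consistent is where the care is required.
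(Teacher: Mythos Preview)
Your proposal is correct and follows precisely the approach the paper indicates: the paper does not give an explicit proof of Theorem~\ref{thm_even_so_gcd} but simply states that ``the same idea can be applied to SO-GCD,'' and your proposal faithfully carries out those two modifications (replacing $2^n-1$ by $2^{n-1}-1$ and replacing the unit mass by $\psi$) in the GCD bookkeeping, with the factorization $p_{Z^k|R^k}(z^{k,j}|r^k)\,\psi'(z^{k,j})$ via Gallager's lemma being exactly the GCD analogue of the $\phi(j)I_j$ term in Theorem~\ref{thm_even_so}. Your write-up is in fact more detailed than what the paper provides.
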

The same approach for even codes applies to SO-OSD and SO-SCL. More generally, additional binary linear constraints that constrain the parities of non-overlapping bits can be used in a similar fashion. For example, if the first half of codeword bits are known to be even, and so are the second, a double constraint can be applied with a suitable product normalization.

\section{SO Quality - Theory}
\label{sec:theory}
To theoretically assess the difference between the SO formula with and without knowledge of the code being even, eq. \eqref{eq_SOGRAND} and eq. \eqref{eq_SO_even}, we do so in the context of Ordered Reliability Bit Guessing Random Additive Noise Decoding (ORBGRAND) \cite{Duffy21_ordered,duffy2022_ordered} a near maximum likelihood decoder that has been shown to be capacity achieving \cite{Liuetal23}.
Note that eq. \eqref{eq_SOGRAND} and \eqref{eq_SO_even} only differ in the second term of denominator, which approximates the probability of unvisited codewords. Therefore, we only need to study the difference 
\begin{align*}
    &\left(\psi-\sum_{j=1}^{q_L}\phi(j)I_{j}\right)\frac{2^k-1}{2^{n-1}-1}-\left(1-\sum_{j=1}^{q_L}\phi(j)\right)\frac{2^k-1}{2^n-1}\\
    &\approx \left(\psi-\sum_{j=1}^{q_L}\phi(j)I_{j}\right)\frac{2^k}{2^{n-1}}-\left(1-\sum_{j=1}^{q_L}\phi(j)\right)\frac{2^k}{2^n}\\ 
    &=\frac{2^k}{2^n}\left(2\psi-2\sum_{j=1}^{q_L}\phi(j)I_{j}-1+\sum_{j=1}^{q_L}\phi(j)\right)=:\frac{2^k}{2^n}\Delta,
\end{align*}
where the approximation in the second line assumes that $2^{n}>2^{n-1}>2^k\gg 1$. Thus it suffices to analyze the quantity $\Delta$ to investigate the difference between the two SO formulae.

We now give a brief summary of basic ORBGRAND. In increasing order, let $\pi_i\in\{1,\ldots,n\}$ be the rank of the reliability of the $i$-th bit within the $n$ received bits and let $\logisticweight(z^n)= \sum_{i=1}^n\pi_iz_i$ denote the Logistic Weight of the noise effect $z^n$ \cite{Duffy21_ordered,duffy2022_ordered}. Basic ORBGRAND assumes that the rank ordered reliabilities are approximately linear; namely, that there exists $\beta>0$ such that $\reliability(r_i)\approx\beta\pi_i$ for all $i=1,\cdots,n$. If rank ordered reliability is linear, then the right hand side of eq. \eqref{noiseprob} equals
\begin{align}
    \exp\left(-\sum_{i=1}^n \beta\pi_iz_i\right).\label{eq_temp1}
\end{align}
Hence, to create noise effect sequences in order of decreasing likelihood it suffices to create them in increasing order of order of their Logistic Weight, which is what basic ORBGRAND does. The noise effects with Logistic Weight $w$ are generated by finding all distinct partitions of the integer $w$ into components that are at most $n$, where each component of a partition indicates which bit should be flipped to generate a noise effect. Note that the correspondence between noise effects and distinct integer partitions with component at most $n$ is a bijection. Creating such integer partitions can be achieved with a simple circuit \cite{Riaz24}.

The Pentagonal Number Theorem \cite{gupta1970partitions} states that
\begin{align}
    \prod_{i=1}^\infty(1-x^i)=1+\sum_{k=1}^\infty(-1)^{k}\left(x^{\frac{k(3k+1)}{2}}+x^{\frac{k(3k-1)}{2}}\right),\label{eq_pentagonal_thm}
\end{align}
which implies the number of distinct partitions of an integer $w$ into even parts equals the number of distinct partitions of $w$ into odd parts if $w$ is not of the form $k(3k\pm 1)/2$. The two quantities differ by exactly one if $w$ is of the form $k(3k\pm 1)/2$, which is called a generalized pentagonal number. Interpreted for Logistic Weights $w\leq n$, Eq. \eqref{eq_pentagonal_thm} says that the number of noise effects with even parity is the same as the noise effects with odd parity if $w$ is not a generalized pentagonal number and differs by 1 if $w$ is a generalized pentagonal number. 

Let $\partitionnumber_0(w,n)$ denote the number of even distinct partitions where no component can exceed $n$, then $\partitionnumber_0(w,n)$ equals the number of queries with Logistic Weight $w$ with even Hamming Weight. Similarly, $\partitionnumber_1(w,n)$ corresponds to the number of odd distinct partitions where no terms can exceed $n$, which further represents the number of queries with Logistic Weight $w$ with odd Hamming Weight given the code length $n$. Equation (\ref{eq_pentagonal_thm}) then says, for $w\leq n$,
\begin{align}
    \partitionnumber_0(w,n)-\partitionnumber_1(w,n)=\begin{cases}
        (-1)^k &\text{if } w=k(3k\pm 1)/2\\
        0 & \text{else}
    \end{cases}\label{eq_query_number}
\end{align}

For ease of presentatinon, we assume that if ORBGRAND finds the $L$-th most likely codeword at Logistic Weight $w^*$, instead of stopping immediately, it stops after generating all noise effects with Logistic Weight $w^*$. This assumption does not affect the conclusion solely simplifies the proof. In addition, we assume that $L$ codewords will be found within the noise effects with Logistic Weight not exceeding $n$. 
\begin{thm}\label{thm_diff}
    Under the assumptions of Theorem \ref{thm_even_so} and further assuming that eq. \eqref{eq_temp1} holds. Suppose that ORBGRAND finds the $L-$th codeword at Logistic Weight $w^*\leq n$. Defining
    \begin{align}
    \Theta(w)=\exp{(-\beta w)}\prod_{i=1}^n(1-B_i). \label{eq_pl}
    \end{align}
    then $\lim_{\beta\rightarrow 0} |\Delta|= 2^{-n}$ and
    \begin{align}
    &|\Delta|\leq 2\Theta(w^*) +\left|\sum_{w=n+1}^{n(n+1)/2} \Theta(w)(\partitionnumber_0(w,n)-\partitionnumber_1(w,n))\right|\label{eq_decomposition}
\end{align}
\end{thm}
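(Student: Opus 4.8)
The plan is to exploit the ORBGRAND linearity assumption \eqref{eq_temp1} to reduce every quantity appearing in $\Delta$ to a weighted count of integer partitions, then to recognise $\prod_{i=1}^n(1-2B_i)$ as the signed distinct-partition generating function evaluated at $x=e^{-\beta}$, which makes the bulk of $\Delta$ telescope away. First I would note that under \eqref{eq_temp1} the posterior of a noise effect depends only on its Logistic Weight: from \eqref{noiseprob}, $\phi(j)=\prod_{i=1}^n(1-B_i)\exp(-\beta \logisticweight(z^{n,j}))=\Theta(\logisticweight(z^{n,j}))$ with $\Theta$ as in \eqref{eq_pl}. Under the stated stopping rule the generated noise effects are exactly those of Logistic Weight $w\le w^*$, of which $\partitionnumber_0(w,n)$ (resp. $\partitionnumber_1(w,n)$) have even (resp. odd) Hamming weight and hence even (resp. odd) parity $\Phi$. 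Since $I_j$ selects the effects whose parity matches $\Phi(y^n)$, in the case $\Phi(y^n)=0$ one has
\[
\sum_{j=1}^{q_L}\phi(j)=\sum_{w=0}^{w^*}\Theta(w)\bigl(\partitionnumber_0(w,n)+\partitionnumber_1(w,n)\bigr),\qquad \sum_{j=1}^{q_L}\phi(j)I_j=\sum_{w=0}^{w^*}\Theta(w)\partitionnumber_0(w,n),
\]
with $\partitionnumber_0$ and $\partitionnumber_1$ interchanged when $\Phi(y^n)=1$.

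The crux is the identity $\prod_{i=1}^n(1-2B_i)=\sum_{w=0}^{n(n+1)/2}\Theta(w)\bigl(\partitionnumber_0(w,n)-\partitionnumber_1(w,n)\bigr)$. I would prove it by substituting $x=e^{-\beta}$ into the signed generating function $\prod_{i=1}^n(1-x^i)=\sum_w(\partitionnumber_0(w,n)-\partitionnumber_1(w,n))x^w$ and using $1-B_i=(1+e^{-\beta\pi_i})^{-1}$ and $1-2B_i=(1-e^{-\beta\pi_i})/(1+e^{-\beta\pi_i})$ together with the fact that $\pi$ is a permutation of $\{1,\dots,n\}$. Feeding this identity and the two grouped sums into the definition of $\Delta$, both parity cases collapse, up to an overall sign, to
\[
\Delta=\sum_{w=w^*+1}^{n(n+1)/2}\Theta(w)\bigl(\partitionnumber_0(w,n)-\partitionnumber_1(w,n)\bigr),
\]
i.e. all contributions from $w\le w^*$ cancel and $\Delta$ is the tail of the signed generating function.

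To reach \eqref{eq_decomposition} I would split this tail at $w=n$. For $w\le n$ the ``parts $\le n$'' constraint is vacuous, so $\partitionnumber_0-\partitionnumber_1$ is governed by \eqref{eq_query_number} and vanishes except at generalized pentagonal numbers. Writing $A=\sum_{w=w^*+1}^{n}\Theta(w)(\partitionnumber_0-\partitionnumber_1)$ and applying summation by parts against the decreasing sequence $\Theta$ gives $|A|\le(\max_w|D(w)|)\,\Theta(w^*+1)$, where $D(w)=\sum_{v=w^*+1}^{w}(\partitionnumber_0-\partitionnumber_1)$ telescopes the $\Theta$-differences. Because the partial sums $P(w)=\sum_{v=0}^{w}(\partitionnumber_0-\partitionnumber_1)$ of the pentagonal coefficients stay in $\{-1,0,1\}$, we get $|D(w)|=|P(w)-P(w^*)|\le 2$ and hence $|A|\le 2\Theta(w^*+1)\le 2\Theta(w^*)$, while the residual $w>n$ is precisely the second term of \eqref{eq_decomposition}. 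For the limit, $\beta\to0$ sends $\Theta(w)\to 2^{-n}$ and the full signed sum vanishes since $\prod_{i=1}^n(1-x^i)\big|_{x=1}=0$; the tail then collapses to $-2^{-n}P(w^*)$, and one checks $|P(w^*)|=1$ to obtain $\lim_{\beta\to0}|\Delta|=2^{-n}$.

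The main obstacle is the bound $|A|\le 2\Theta(w^*)$. It hinges on the non-obvious fact that the partial sums of the pentagonal coefficients are bounded by $1$ in absolute value, which is what tames the $-\,-\,+\,+$ sign pattern running over consecutive generalized pentagonal numbers; combining that boundedness with the monotonicity of $\Theta$ through the Abel summation is the delicate step. Once the generating-function identity of the second paragraph is in hand, everything else is bookkeeping.
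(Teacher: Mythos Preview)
Your proof follows the same route as the paper's: express $\phi(j)$ through $\Theta(w)$, collapse $\Delta$ (up to a sign depending on $\Phi(y^n)$) to the tail $\sum_{w>w^*}\Theta(w)\bigl(\partitionnumber_0(w,n)-\partitionnumber_1(w,n)\bigr)$, split at $w=n$, and control the $w\le n$ piece via the pentagonal-number structure \eqref{eq_query_number}. The paper uses the identity $\prod_i(1-2B_i)=\sum_w\Theta(w)(\partitionnumber_0-\partitionnumber_1)$ only implicitly in passing from \eqref{eq_lefterror1} to \eqref{eq_lefterror2}; making it explicit, as you do, is cleaner. For the range $w^*<w\le n$ the paper simply asserts the sum is ``bounded by the first two terms,'' $|\Theta(k_1)\pm\Theta(k_2)|\le 2\Theta(w^*)$, which is terse given the paired sign pattern; your Abel-summation argument using $|P(w)|\le 1$ is a more rigorous way to the same $2\Theta(w^*)$. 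One caveat on the limit: ``one checks $|P(w^*)|=1$'' is not true for every $w^*$ (e.g.\ $P(1)=P(5)=0$), so the limit $2^{-n}$ is not literally attained at all stopping weights; the paper's one-line argument glosses over the same point.
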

\begin{proof}
By (\ref{eq_temp1}), $\phi(j)=p_{Z^n|R^n}(z^{n,j}|r^n)$ can be in term calculated by $\Theta(w)$ if the $j$-th generated noise effect has Logistic Weight $w$. Then
\begin{align}
    \left|\Delta\right|=\left|
    1-2\phi+\sum_{j=1}^{q_L}\phi(j)I_j-\sum_{j=1}^{q_L}\phi(j)(1-I_j)
    \right|\nonumber\\
    \overset{(\ref{eq_pl})}{=}\left|1-2\phi\pm\sum_{w=0}^{w^*}\Theta(w)(\partitionnumber_0(w,n)-\partitionnumber_1(w,n))\right|\label{eq_lefterror1}
\end{align}

As $\beta\rightarrow 0$, which corresponds to the limit of low SNR, $B_i\rightarrow 1/2$, and hence $\Theta(w)\rightarrow2^{-n}$ and $|\Delta|\rightarrow 2^{-n}$. Letting $g(w,n)=\Theta(w)(\partitionnumber_0(w,n)-\partitionnumber_1(w,n))$, we can simplify $\Delta$ as
\begin{align}
    &|\Delta|=\left|\left(1-\sum_{j=1}^{q_L}\phi(j)\right)-2\left(\psi-\sum_{j=1}^{q_L}\phi(j)I_j\right)\right|\nonumber\\
    &=\left|\left(1-\psi-\sum_{j=1}^{q_L}\phi(j)(1-I_j) \right)-\left(\phi- \sum_{j=1}^{q_L}\phi(j)I_j\right)\right|\nonumber\\
    &=\left|\sum_{w=w^*+1}^{n} g(w,n)+\sum_{w=n+1}^{n(n+1)/2} g(w,n)\right|\label{eq_lefterror2}
\end{align}
Note that the noise effects are generated such that $\Theta(i)\leq\Theta(j)$ for $j\geq i$. Together with eq. \eqref{eq_query_number}, the first summation in eq. \eqref{eq_lefterror2} can be bounded by the first two terms with $w$ a generalized pentagonal number:
\begin{align}
    (\ref{eq_lefterror2})\leq |\Theta(k_1)\pm \Theta(k_2)|+\left|\sum_{w=n+1}^{n(n+1)/2} g(w,n)\right|\nonumber\\
    \leq 2\Theta(w^*) +\left|\sum_{w=n+1}^{n(n+1)/2} \Theta(w)(\partitionnumber_0(w,n)-\partitionnumber_1(w,n))\right|\nonumber
\end{align}
where $k_1<k_2$ are the first two pentagonal numbers after $w^*$.
\end{proof}

Theorem \ref{thm_diff} establishes that the difference $2^{k-n}\Delta$ between the approximations of likelihood of unvisited codewords in eq. \eqref{eq_SOGRAND} and \eqref{eq_SO_even} is bounded by $2^{k-2n}$ which is negligible for practical $k$ and $n$. In addition, for general value of $\beta$, the upper bound of $|\Delta|$ can be decomposed into two parts in eq. \eqref{eq_decomposition}. The term $2\Theta(w^*)$ vanishes as $w^*$ increases, which can be achieved by increasing list size. The other term is a constant independent of the stopping Logistic Weight $w^*\leq n$, which is rare for moderate redundancy codes. This result suggests that while an improvement in accuracy with additional constraints is to be expected, it is likely to be modest.

\section{SO Quality Quantification}\label{section_bs}
Blockwise SO can be considered as a forecast for the correctness of the decoding. With $\softoutput\in[0,1]$ denoting the blockwise SO, we define $\outcome=1$ if the decoded codeword is correct and $\outcome=0$ if not. Two features are desirable for a forecaster, which are called calibration and refinement \cite{DeGroot83,gneiting2007strictly,wilks2011statistical}. A forecaster is well-calibrated if the $P(\outcome|\softoutput=s)=s$ holds for all $s\in[0,1]$. That is, the conditional likelihood of correctness given the forecast corresponds to the forecast. It has been demonstrated empirically that SO-GRAND \cite{galligan2023upgrade,yuan2025SOGRAND}, SO-GCD \cite{duffy2025SOGCD} and SO-SCL \cite{yuan2025SOSCL} are well calibrated forecasters but Forney's approximation is not. A forecaster is called least-refined if no other well-calibrated forecaster can provide more confident predictions.  

The Brier Score (BS), which is the mean square error between the forecast and the outcome, quantifies calibration and refinement. Consider a collection of $N$ experiments producing forecasts and outcomes $\{(s_t,o_t):t=1,\ldots,N\}$. The Brier Score is defined to be
\begin{align}
    \text{BS}&=\frac{1}{N}\sum_{t=1}^N(s_t-o_t)^2.\label{BS1}
\end{align}
If $s$ is quantized, only taking values in a finite set $\mathcal{S}\subset[0,1]$, then eq. \eqref{BS1} can be re-written into a form that separates terms for calibration and refinement. Let $v(s)=N^{-1}\sum_{t=1}^N I_{\{s_t=s\}}$ denote the empirical frequency that the forecaster gives the prediction $s$ and $\rho(s)=\sum_{t=1}^N I_{\{s_t=s\}}o_t/\sum_{t=1}^N I_{\{s_t=s\}}$ be the empirical probability of correct prediction conditioned on the prediction value being $s$. Then
\begin{align}
    \text{BS}=\sum_{s\in\mathcal{S}} v(s)[s-\rho(s)]^2+\sum_{s\in\mathcal{S}}v(s)\rho(s)[1-\rho(s)].\label{BS3}
\end{align}
When a forecaster is well-calibrated, the first term is zero. When a forecaster is the least-refined, the second term is the minimum among all other well-calibrated forecasters. In general, BS allows the evaluation of forecaster performance. 

In most settings where BS values are used, the minimum of the likelihood that the outcome, $O$, is 0 or 1 is bounded away from 0. In error correction systems, that is not the case as most practical decoders return correct decodings most of the time. Considering a na\"ive estimator that is always confident that the decoding is correct, i.e. it always provides $S=1$. That forecast results in eq. \eqref{BS1} giving that BS = BLER. Consequently, we can expect that any decoder producing accurate SO will have a BS value at most the same as BLER, which is typically close to zero. Therefore, when comparing SO formulas, we consider them on a $\log$ scale. Moreover, in some cases, BS from different groups can be almost identical due to these extremely rare cases of erroneous decoding. Inspired by the Brier Skill Score \cite{Mur73}, we propose a BS Ratio (BSR) over na\"ive predictors as an alternative quantification. Before comparing the BS of different decoders, each of them can be divided by the lowest BLER at that SNR. 

To assess the SO formulae, we consider binary modulation experiencing Additive White Gaussian Noise. We consider an eBCH code, which is even by construction, and present results for dimensions $(16,11)$ as with $2^{11}=1024$ codewords direct evaluation of the optimal SO in eq. \eqref{eq_trueSO} is possible. Similar results, which are not reported due to space constraints, were observed for longer codes and codes of distinct structure. Here, a max decoding with $L\geq 2$ means considering the element in the decoded list with highest SO.

\begin{figure}
    \centering
    \includegraphics[width=\linewidth]{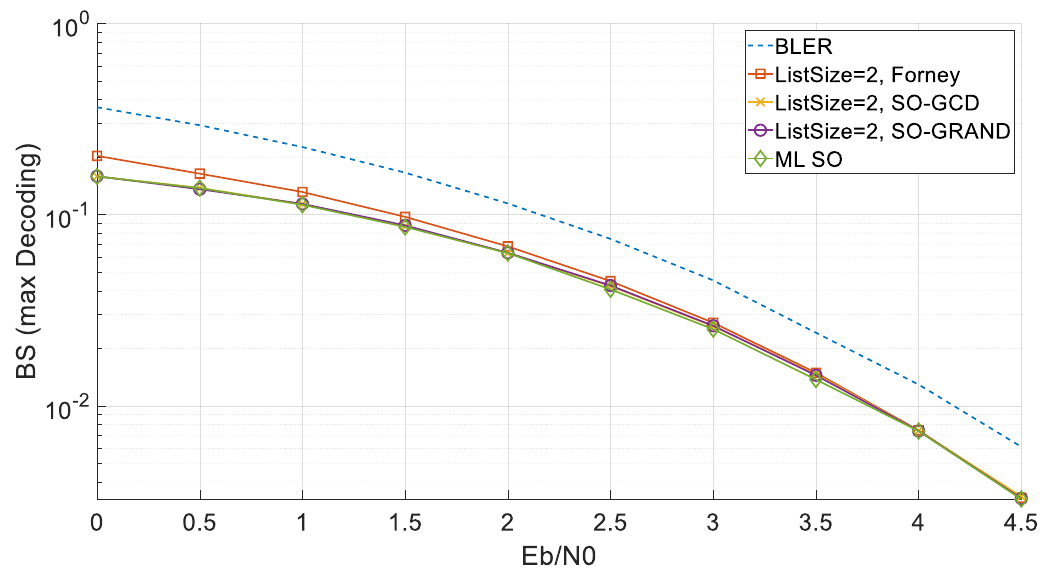}
    \caption{Example of BS vs Eb/N0 using eBCH(16, 11) code. } 
    \label{fig_overall_eBCH_16_11}
\end{figure}

Fig. \ref{fig_overall_eBCH_16_11} presents a comparison of Forney, SO-GCD, SO-GRAND and ML SO when $L=2$. The dashed line indicates the na\"ive SO of ORBGRAND (ORB), its BLER. The line with diamonds represents ML decoder SO. The line with squares represents Forney's approximation, eq. \eqref{eq_ForneySO} where different decoders with the same decoded lists provide the same value. The line with crosses represents SO-GCD, while the line with circles represents SO-GRAND eq. \eqref{eq_SOGRAND}. As can be seen, Forney's approximation has higher BS in lower SNR compared to other SO while SO-GCD and SO-GRAND have almost identical BS to the ML SO, which is the most accurate estimate possible without knowing what codeword was transmitted. 

\begin{figure}
    \centering
    \includegraphics[width=\linewidth]{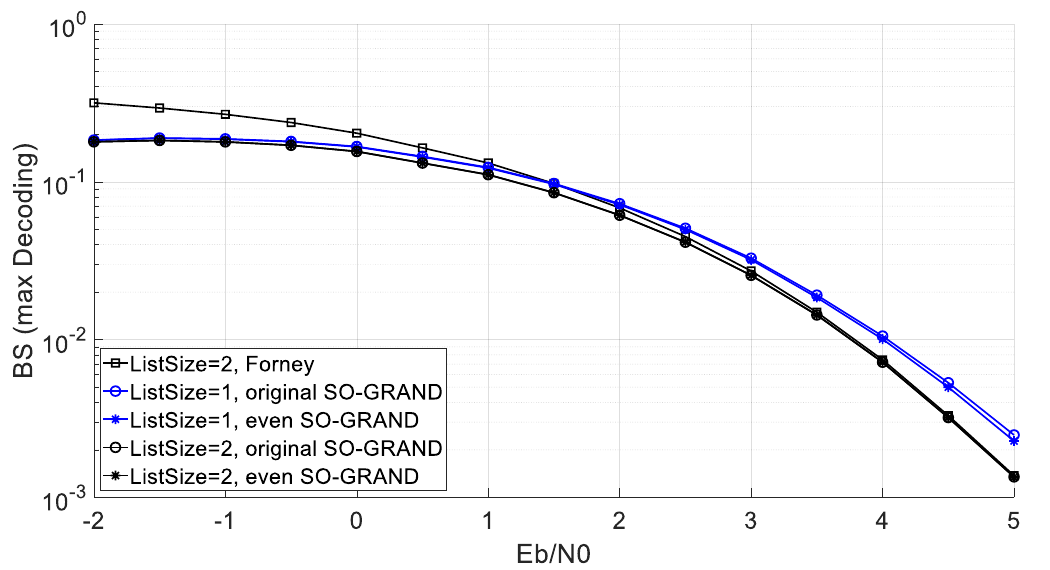}
    \caption{Example of BS using eBCH(16, 11) code. } 
    \label{eBCH_16_11_even_compare}
\end{figure}

Fig. \ref{eBCH_16_11_even_compare} investigates the results in Section \ref{section_even_code} and \ref{sec:theory}. Blue lines represent list size being one while black lines represent list size being two. The line with squares represents Forney's SO, which requires list size at least two. Lines with circles represent the original SO-GRAND \eqref{eq_SOGRAND} and Lines with stars represent even SO-GRAND \eqref{eq_SO_even}. Even SO-GRAND provides a better estimate at $L=1$, but the improvement vanishes at $L=2$, which is consistent with the result in Section \ref{sec:theory}.

\section{Conclusions}\label{sec_conclusion}
We demonstrated that binary linear codebook constraints can be leveraged to improve the recently developed blockwise SO from SO-GRAND, SO-GCD (or OSD), and SO-SCL. We introduced a Brier Score scheme by which well-calibrated SO formulas can be assessed. Simulation results using BS suggest that both SO-GRAND and SO-GCD with list size two have almost identical accuracy to a ML SO.

\bibliographystyle{IEEEtran}
\bibliography{b}

\end{document}